\renewcommand\l@subsection{\@tocline{2}{0pt}{2pc}{5pc}{}}
\newtheorem*{rep@theorem}{\rep@title}
\newcommand{\newreptheorem}[2]{%
	\newenvironment{rep#1}[1]{%
		\def\rep@title{#2 \ref{##1}}%
		\begin{rep@theorem}}%
		{\end{rep@theorem}}} 
\theoremstyle{plain}
\newtheorem{thm}{Theorem}[section]
\newtheorem{prop}[thm]{Proposition}
\theoremstyle{definition}
\newtheorem{defin}[thm]{Definition}
\newtheorem{example}[thm]{Example}
\newtheorem{def/ex}[thm]{Definition/Example}
\theoremstyle{remark}
\newtheorem{rem}[thm]{Remark}
\newtheorem{rems}[thm]{Remarks}
\newcommand{\refD}[1]{Definition~\ref{D:#1}}
\newcommand{\Z}{{\mathbb Z}}
\newcommand{\ac}{{\operatorname{ac}}}
\newcommand{\esc}{{\searrow\!\!\!\searrow}}
\begin{document}


\title[Hypergraphs and political structures]{Hypergraph models for political structures}


\author{Ismar Voli\'c}
\address{Department of Mathematics, Wellesley College, 106 Central Street, Wellesley, MA 02481}
\email{ivolic@wellesley.edu}
\urladdr{ivolic.wellesley.edu}

\author{Zixu Wang}
\address{Department of Mathematics, Wellesley College, 106 Central Street, Wellesley, MA 02481}
\email{zw10@wellesley.edu}



\begin{abstract}
Building on \cite{MV:Politics}, this paper extends the modeling of political structures from simplicial complexes to hypergraphs. This allows the analysis of more complex political dynamics where agents who are willing to form coalitions contain subsets that would not necessarily form coalitions themselves. We extend topological constructions such as wedge, cone, and collapse from simplicial complexes to hypergraphs and use them to study mergers, mediators, and power delegation in political structures. Concepts such as agent viability and system stability are generalized to the hypergraph context, alongside the introduction of the notion of local viability. Additionally, we use embedded homology of hypergraphs to analyze power concentration within political systems. Along the way, we introduce some new notions within the hypergraph framework that are of independent interest. 
\end{abstract}


\maketitle 

\tableofcontents


\parskip=6pt
\parindent=0cm


\section{Introduction}\label{S:Intro}

Hypergraphs are generalizations of both graphs and simplicial complexes in that they allow edges to connect an arbitrary number of vertices (like simplices in a simplicial complex do), but subsets of vertices that form hyperedges do not necessarily have to form hyperedges themselves (which is a requirement in simplicial complexes). This flexibility makes them a convenient tool for studying complex, non-binary relational systems, with applications in fields ranging from computational biology to network science \cite{AA, BBHMM, BFK, EERdDR, FGMR, FHZXC, KHZ, O-KMSR, Ramadan2004, XHXZYH, ZGYGLY} (a further list of applications can be found, for example, in \cite{AAINR:HypergraphPersistent}).

In this paper, hypergraphs model relationships among agents in a political system. Agents are represented by vertices, and if a subset of agents is willing to negotiate or enter a coalition, a hyperedge determined by those vertices is formed. In \cite{MV:Politics}, the same situation was modeled by a simplex in a simplicial complex, which imposed a somewhat unnatural condition that any subset of agents that form a coalition also must form a coalition. This, of course, is not the case in reality since some agents might only be willing to negotiate if other agent(s) are present. If such cohesive agents leave, the entire coalition might collapse. The case of a mediator brought in for coalition-building among feuding agents is an example of this. Hypergraphs do not impose a condition of closure under taking subsets, which allows for a more realistic model developed here.

Once the transition from simplicial complexes to hypergraphs is established, we devote the bulk of the paper to translating and carrying over the results from \cite{MV:Politics} to this new setting. This is sometimes straightforward and sometimes not since the topology of hypergraphs is not as established or well studied as that of simplicial complexes; hypergraphs are in the literature usually treated with extensions of graph-theoretic tools rather than topological ones. Our point of view is geometric, namely that hypergraphs are simplicial complexes embedded in a Euclidean space but with some faces missing, and this provides intuition for extending some basic topological notions to them: cone, suspension, join, wedge, and (strong) collapse, among others. As in \cite{MV:Politics}, these constructions are then translated into phenomena such as mergers, introduction of mediators, and delegation of power. 

Homology is another important topological notion for us, but the standard construction of completing a hypergraph to a simplicial complex and then taking its homology turns out to be insufficient. This is because the closure of a hypergraph adds faces/coalitions that are not there in the actual political structure. Luckily, a more suitable notion of \emph{embedded} homology for hypergraphs already exists \cite{BLRW} and this allows us to say something about the possible viabilities, the dynamics of relationships, and concentration of power in a political structure.

To assess the impact of different operations, we import the ideas of agent viability and system stability from \cite{MV:Politics} into the hypergraph framework. The results describing the interplay between these notions and various topological operations translate to hypergraphs in a straightforward way. We also introduce \emph{local} viability that captures the importance of an agent within their framework of coalitions. As we will argue, this idea turns out to carry richer information for hypergraphs than when restricted to simplicial complexes. 

Much like \cite{MV:Politics}, this paper only initiates the study of political structures via hypergraphs. Further avenues of investigation are suggested along the way, and the last section is also devoted to this. We feel that hypergraphs are a potentially fruitful approach to studying political structures with possible exciting extensions to other topics in game theory and social choice theory.






\subsection{Acknowledgments}
Ismar Voli\'c is grateful to the Simons Foundation for its support.

\section{Hypergraphs}\label{S:Hypergraphs}

This section introduces fundamental hypergraph concepts, with definitions that diverge slightly from those conventionally adopted in classical graph theory because we aim to establish a closer correspondence with simplicial complexes. This review is by no means exhaustive; these are many resources for further details, including, for example, \cite{B:(Hyper)graphs, B:HypergraphTheory}. Some definitions, like domination and elementary collapses, do not seem to appear in the literature, but are natural extensions of the same notions for simplicial complexes.


\subsection{Basic definitions}\label{S:BasicHypergraphs}

\begin{defin}\label{D:Hypergraph}
A \emph{hypergraph} \( H \) is a pair \( H = (V, E) \), where
\begin{itemize}
    \item \( V \) is a finite set whose elements are called \emph{vertices};
    \item \( E \) is a finite set whose elements are non-empty subsets of \( V \) called \emph{hyperedges}. The set of hyperedges must include all singleton subsets of \( V \), i.e., every vertex \( v \in V \) is also a hyperedge \( \{v\} \in E \).
\end{itemize}
\end{defin}

\begin{rems}\ 
\begin{enumerate}
\item In the usual definition of a hypergraph, vertices are not required to be in the hyperedge set. In our setup, vertices will correspond to agents and hyperedges to coalitions, and we want a single agent to be a valid coalition; hence the inclusion of elements of $V$ in $E$.
\item A \emph{simplicial complex} is a hypergraph with the additional condition that the edge set be ``downward closed,'' i.e.~closed under taking subsets. Thus if $e\subset V$ is a hyperedge (or a \emph{simplex} in the language of simplicial complexes), so are all subsets of $e$ (its \emph{faces}). For a review of simplicial complexes, see Appendix A of \cite{MV:Politics}.
\end{enumerate}
\end{rems}

Since the vertex set is contained in $E$, we will often refer to a hypergraph \( H \) just by its hyperedge set \( E \).
For a hypergraph \( H \) with \( k \) vertices, we label the vertices \( v_1, \ldots, v_k \), without prioritizing the order of the enumeration. 

\begin{example}\label{ex:hypergraph}
Let \( H = (V, E) \) be the hypergraph with the vertex set \( V = \{v_1, v_2, v_3, v_4, v_5\} \) and the hyperedge set
\begin{align*}
E = \{ &\{v_1\}, \{v_2\}, \{v_3\}, \{v_4\}, \{v_5\}, \{v_1, v_2\}, \{v_1, v_5\}, \\
       &\{v_2, v_4\}, \{v_3, v_4\}, \{v_1, v_3, v_4\}, \{v_1, v_2, v_3, v_4\} \}.
\end{align*}
This can be visualized in a standard way with vertices drawn as nodes and 
hyperedges depicted as closed curves containing the appropriate vertices. To simplify the visualization, we do not draw hyperedges that consist of a single vertex. Following this convention, the hypergraph in this example is represented by Figure \ref{fig:hypergraph-visual}.

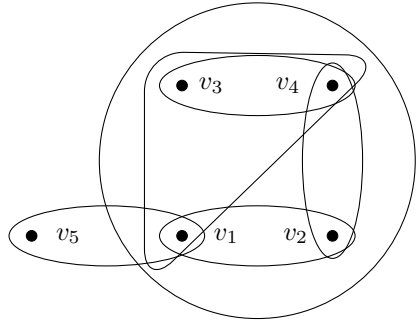
\begin{figure}[htbp]
\centering
\begin{tikzpicture}
    \tikzstyle{vertex} = [circle, draw, fill=black, inner sep=0pt, minimum width=4pt]

    \node[vertex] (v1) at (0, 0) {};
    \node[vertex] (v2) at (2, 0) {};
    \node[vertex] (v3) at (0, 2) {};
    \node[vertex] (v4) at (2, 2) {};
    \node[vertex] (v5) at (-2, 0) {};

    \draw (1, 0) ellipse (1.3cm and 0.4cm); 
    \draw (-1, 0) ellipse (1.3cm and 0.4cm); 
    \draw (2, 1) ellipse (0.4cm and 1.3cm); 
    \draw (1, 2) ellipse (1.3cm and 0.4cm); 
    \draw[rounded corners=15pt] (-0.5, -0.7) -- (-0.5, 2.45) -- (2.7, 2.4) -- cycle; 
    \draw (1, 1) circle (2.1cm); 

    \node[right=3mm] at (v1) {$v_1$};
    \node[left=2mm] at (v2) {$v_2$};
    \node[right=1mm] at (v3) {$v_3$};
    \node[left=3mm] at (v4) {$v_4$};
    \node[right=2mm] at (v5) {$v_5$};
\end{tikzpicture}
\caption{A visualization for the hypergraph of Example \ref{ex:hypergraph}.}
\label{fig:hypergraph-visual}
\end{figure}

\end{example}

\begin{defin}\label{def:VertexDegree}
The \emph{degree} of a vertex \( v \) in \( H \) is 
given by
\[
\deg(v) = |\{e \in E \,|\, v \in e\}|-1.
\]
\end{defin}
\begin{rem}
The usual way to define the degree of $v$ is as the number of hyperedges that contain it, but we subtract one because we do not want to include the hyperedge consisting of $v$ alone in the count. Isolated vertices, namely those that are not elements of any edges of size two of greater, thus have degree 0.
\end{rem}

\begin{example}\label{ex:VertexDegree}
For the hypergraph described in Example \ref{ex:hypergraph}, \( \text{deg}(v_4) = 4 \).
\end{example}

\begin{defin}\label{D:DimensionHypergraph}\ 
\begin{itemize}
\item The \emph{size} of a hyperedge \( e \) in a hypergraph \( H \)   is  \( |e| \), namely its cardinality. 
\item The \emph{dimension} of \( H \),  \( \dim(H) \), is the maximum size of a hyperedge, i.e.
\[ \dim(H) = \max_{e \in E} |e|. \].
\item The \emph{s-vector} of \( H \) is a vector \( (s_1, s_2, \ldots, s_{\dim(H)}) \), where \( s_i \) denotes the number of hyperedges of size \( i \). 
\item The \emph{s-vector of vertex \( v \)}, denoted by \( (s_1(v), s_2(v), \ldots, s_{m(v)}(v)) \), is a vector with  \( s_i(v) \) denoting the number of hyperedges of size \( i \) in \( H \) that contain the vertex \( v \) and \( m(v) \) is the size of the largest hyperedge containing $v$.
\end{itemize}
\end{defin}

\begin{example}\label{ex:DimensionHypergraph}\ 
\begin{itemize}
\item The dimension of the hypergraph in Example \ref{ex:hypergraph} is 4 because its largest hyperedge contains four vertices. Its $s$-vector is $(5, 4, 1, 1 )$. The $s$-vector of $v_1$ is $(1,2,1,1)$.
\item A 2-dimensional hypergraph is an ordinary graph.
\end{itemize}
\end{example}

It is immediate from the definitions that 
$\sum_{i=1}^{\dim(H)} s_i = |E|$ and $\sum_{i=1}^{m(v)} s_i(v) = \deg(v)+1.$

\begin{defin}\label{D:HyperedgeNeighborhood}\ 
The \emph{neighborhood of a hyperedge $e$} is the set of hyperedges in $H$ that have $e$ as a subset:
\[
\operatorname{N}(e) = \{h \in E \,|\, e \subset h\}.
\]
\end{defin}
Note that $e$ could be a vertex $v$, in which case the above defines a neighborhood $N(v)$ of $v$. It follows that $\deg(v) = |\operatorname{N}(v)| - 1$. In case the hypergraph is a simplicial complex, the notion of a neighborhood coincides with the \emph{star} of the hyperedge/face $e$, denoted by $\operatorname{st}(e)$. 
%


\begin{defin} Two vertices are \emph{adjacent}  if there exists a hyperedge containing both.
The \emph{adjacency count} of a vertex \( v \) is the number of vertices adjacent to $v$, namely
\[ \ac(v) = |\{ u \in V \setminus v\   | \text{ there exists } e \in E \text{ with } v , u\in e \}|. \]
\end{defin}

\begin{defin}\label{D:CompleteHypergraph} A hypergraph with the set of hyperedges $E = \mathcal{P}_0(V)$, namely the set of nonempty subsets of $V$, is called \emph{complete}. 
\end{defin}

In the terminology of simplicial complexes, if $|V|=k$, then the complete hypergraph on $V$ is a standard $(k-1)$-simplex $\Delta^{k-1}$. 

\begin{defin}\label{D:AssociatedSimplicialComplex}
Given a hypergraph $H$, its \emph{(simplicial) closure}, denoted by $K_H$, the smallest simplicial complex containing $H$.
\end{defin}

Thus $K_H$ is obtained by adding all subsets of elements already in $H$ to $H$. There is a natural inclusion map $H \to K_H$, providing a useful functor from the category of hypergraphs to the category of simplicial complexes.



\begin{defin}\label{D:MaximalHyperedge}
A \emph{maximal hyperedge} is a hyperedge that is not contained in another hyperedge. 
\end{defin}

\begin{example}
For the hypergraph in Example \ref{ex:hypergraph}, \(\{v_1, v_3, v_4\}\) is not a maximal hyperedge because it is a subset of the hyperedge \(\{v_1, v_2, v_3, v_4\}\), while the latter is a maximal hyperedge.
\end{example}

\begin{defin}\label{D:Domination}
Let \( v_i \) and \( v_j \) be vertices in a hypergraph. We say that \( v_i \) is \emph{dominated} by \( v_j \), denoted as \( v_i \preceq v_j \), if the following conditions hold:
\begin{enumerate}
  \item Every maximal hyperedge that contains \( v_i \) also contains \( v_j \).
  \item There exists a hyperedge \(\{v_i, v_j\}\).
\end{enumerate}
A hypergraph without any dominated vertices is called \emph{minimal}.

\begin{rem}
In the realm of simplicial complexes, domination is defined using only the first condition. However, for our purposes it also makes sense to impose the second one. Namely, if an agent is dominated by another agent in a political structure, we will want to consider the situation when the former relinquishes their power to the latter; this is modeled by the elementary strong collapse defined below. But for this, we want to assume that the two agents are willing to talk to each other, namely be in a coalition of their own, and this is represented by an edge containing them. 
\end{rem}
\end{defin}

\begin{example}
For the hypergraph in Example \ref{ex:hypergraph}, $v_3$ is dominated by $v_4$, and $v_5$ is dominated by $v_1$. However, $v_1$ is not dominated by $v_4$ because there does not exist a hyperedge that contains precisely $v_1$ and $v_4$.
\end{example}

\begin{defin}\label{D:HypergraphDeletion}
The \emph{deletion of a vertex $v$} in a hypergraph $H$, denoted by $H \setminus v$, is the subhypergraph of $H$ obtained by removing the neighborhood of $v$ from it, i.e.~retaining all hyperedges in $H$ that do not contain $v$:
\[
H \setminus v = \{e \in E \,|\, v \notin e\}.
\]
\end{defin}

\begin{defin}\label{D:ElemStrongCollapse}\ 
\begin{itemize}
\item The deletion of a dominated vertex $v$ from a hypergraph $H$ is called an \emph{elementary strong collapse}, denoted by $H\, \esc\, H\setminus v$. 
%
\item A sequence of elementary strong collapses is a \emph{strong collapse}.
\item The inverse of a strong collapse is a \emph{strong expansion}. 
\item If $\operatorname H_1$ and $H_2$ are related by a sequence of strong collapses and expansions, then they have the same \emph{strong homotopy type}.
\item A hypergraph $H$ is \emph{strongly collapsible} if it has the strong homotopy type of a hypergrph consisting of a single vertex.
\end{itemize}
\end{defin}

\begin{example}\label{Ex:StrongCollapse}
Figure \ref{fig:collapse} below gives an example of a strong collapse.

\end{example}

In a  minimal hypergraph, no elementary strong collapses are possible. Each hypergraphs admits a unique minimal subhypergraph via a strong collapse called the  \emph{core of $H$} and denoted by $H^c$. For example, picture (E) of Figure \ref{fig:collapse} (single vertex) is the core of all the hypergraphs before it.

\begin{defin}\label{D:HypeMap}
A \emph{hypergraph map} $\phi\colon \operatorname H_1\to H_2$ between hypergraphs  $\operatorname H_1$ to $H_2$ is a function that sends vertices to vertices and hyperedges to hyperedges.  A bijective hypergraph map whose inverse is also a hypergraph map is an \emph{isomorphism}.
\end{defin}


\subsection{Geometric realization}\label{S:Realization}


The standard way to visualize and topologize a simplicial complex $K$ is to associate to it its geometric realization $|K|$. Each simplex of $K$ of cardinality $n+1$ corresponds to the standard geometric $n$-simplex in Euclidean space (and topologized as its subset), and the simplices are glued along common faces. The same procedure can be performed for hypergraphs. One way to think about the realization $|H|$ of $H$ is to first  take the geometric realization $|K_H|$ of the simplicial closure of $H$, and then remove faces corresponding to elements of $K_H\setminus H$.

\begin{example}The geometric realization of the hypergraph in Example \ref{ex:hypergraph} is given in picture (A) of Figure \ref{fig:collapse}. Dashed lines indicate that the corresponding 2-vertex hyperedges do not exist. The filled-in triangle indicates that the hypergraph contains the hyperedge \(\{v_1, v_3, v_4\}\). Uncolored triangles indicate that the 3-vertex hyperedges corresponding to those faces do not exist. Additionally, there is a  tetrahedron determined by vertices $v_1, v_2, v_3, v_4$ because there is a hyperedge \(\{v_1, v_2, v_3, v_4\}\). We refrained from trying to color in the tetrahedron in order to make the picture simpler.

The rest of Figure \ref{fig:collapse} gives an illustration of a strong collapse. Since the last pictures is that of a single vertex, the hypergraph is strongly collapsible.

\begin{figure}[htbp]
\centering
\begin{subfigure}{0.45\textwidth}
        \centering
        \begin{tikzpicture}
           \draw[dashed] (-1.6,-0.1) -- (0,1.5);
\draw (0,1.5) -- (1.2,0.4);
\draw[dashed] (0,-1) -- (1.2,0.4);

\draw[dashed] (-1.6,-0.1) -- (0,1.5);
\draw[dashed] (0,-1) -- (1.2,0.4);

\draw[dashed] (1.1,0.4) -- (-1.5,-0.1) ;
\draw (0,-1) -- (0,1.5); 
\draw (-1.6,-0.1) -- (0,-1); 
\draw (-1.6,-0.1) -- (-3.2,-0.5); 

\fill[fill=blue!30, opacity=0.5] (-1.6,-0.1) -- (0,1.5) -- (0,-1) -- cycle;
\fill[fill=blue!30, opacity=0.5] (-1.6,-0.1) -- (0,1.5) -- (1.2,0.4) -- cycle;
\fill[fill=blue!30, opacity=0.5] (-1.6,-0.1) -- (0,-1) -- (1.2,0.4) -- cycle;

\node[above] at (-1.6,-0.1) {\scriptsize $v_1$};
\node[above] at (0,1.5) {\scriptsize $v_4$};
\node[above] at (1.2,0.4) {\scriptsize $v_3$};
\node[below] at (0,-1) {\scriptsize $v_2$};
\node[above] at (-3.2,-0.4) {\scriptsize $v_5$};

\foreach \coord in {(-1.6,-0.1), (0,1.5), (1.2,0.4), (0,-1), (-3.2,-0.5)}
    \filldraw[black] \coord circle (1pt);
    
        \end{tikzpicture}
        \caption{}
        \label{fig:sub1}
    \end{subfigure}\hfill
    \begin{subfigure}{0.45\textwidth}
        \centering
        \begin{tikzpicture}
           \draw[dashed] (-1.6,-0.1) -- (0,1.5);
\draw (0,1.5) -- (1.2,0.4);
\draw[dashed] (0,-1) -- (1.2,0.4);

\draw[dashed] (-1.6,-0.1) -- (0,1.5);
\draw[dashed] (0,-1) -- (1.2,0.4);

\draw[dashed] (1.1,0.4) -- (-1.5,-0.1) ;
\draw (0,-1) -- (0,1.5); 
\draw (-1.6,-0.1) -- (0,-1); 

\fill[fill=blue!30, opacity=0.5] (-1.6,-0.1) -- (0,1.5) -- (0,-1) -- cycle;

\node[above] at (-1.6,-0.1) {\scriptsize $v_1$};
\node[above] at (0,1.5) {\scriptsize $v_4$};
\node[above] at (1.2,0.4) {\scriptsize $v_3$};
\node[below] at (0,-1) {\scriptsize $v_2$};

\foreach \coord in {(-1.6,-0.1), (0,1.5), (1.2,0.4), (0,-1)}
    \filldraw[black] \coord circle (1pt);
    \fill[fill=blue!30, opacity=0.5] (-1.6,-0.1) -- (0,1.5) -- (1.2,0.4) -- cycle;
\fill[fill=blue!30, opacity=0.5] (-1.6,-0.1) -- (0,-1) -- (1.2,0.4) -- cycle;
    
        \end{tikzpicture}
        \caption{}
        \label{fig:sub2}
    \end{subfigure}

    \medskip

    \begin{subfigure}{0.3\textwidth}
        \centering
        \begin{tikzpicture}
           \draw[dashed] (-1.6,-0.1) -- (0,1.5);

\draw[dashed] (-1.6,-0.1) -- (0,1.5);

\draw (0,-1) -- (0,1.5); 
\draw (-1.6,-0.1) -- (0,-1); 

\fill[fill=blue!30, opacity=0.5] (-1.6,-0.1) -- (0,1.5) -- (0,-1) -- cycle;

\node[above] at (-1.6,-0.1) {\scriptsize $v_1$};
\node[above] at (0,1.5) {\scriptsize $v_4$};
\node[below] at (0,-1) {\scriptsize $v_2$};

\foreach \coord in {(-1.6,-0.1), (0,1.5), (0,-1)}
    \filldraw[black] \coord circle (1pt);

        \end{tikzpicture}
        \caption{}
        \label{fig:sub3}
    \end{subfigure}\hfill
    \begin{subfigure}{0.3\textwidth}
        \centering
        \begin{tikzpicture}
          \draw (-1.6,-0.1) -- (0,-1);

\node[above] at (-1.6,-0.1) {\scriptsize $v_1$};
\node[above] at (0,-1) {\scriptsize $v_2$};

\foreach \coord in {(-1.6,-0.1), (0,-1)}
    \filldraw[black] \coord circle (1pt);

        \end{tikzpicture}
        \caption{}
        \label{fig:sub4}
    \end{subfigure}\hfill
    \begin{subfigure}{0.3\textwidth}
        \centering
        \begin{tikzpicture}
          \node[above] at (-1.6,-0.1) {\scriptsize $v_1$};

\foreach \coord in {(-1.6,-0.1)}
    \filldraw[black] \coord circle (1pt);

        \end{tikzpicture}
        \caption{}
        \label{fig:sub5}
    \end{subfigure}
\caption{Illustration of a strong collapse on a sequence of geometric realizations.}
\label{fig:collapse}
\end{figure}
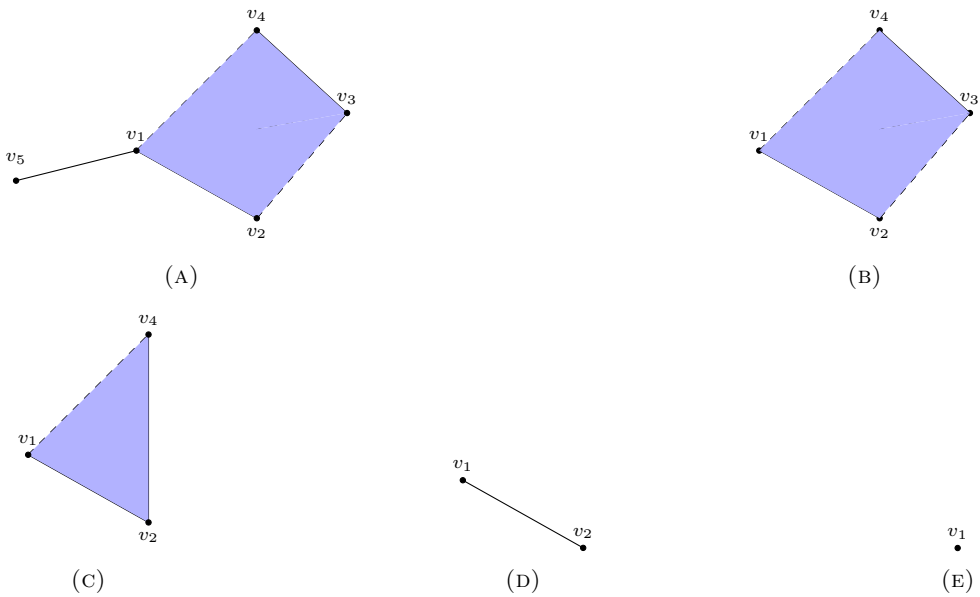

%
%
%
%
%
%
%
\end{example}

From now on, we will use $H$ for a hypergraph $H$ and its realization $|H|$ and will not distinguish between the two.



\subsection{Operations on hypergraphs}\label{S:HypergraphOperations}


There are several standard constructions that can be performed on simplicial complexes, and they have straightforward generalizations to hypergraphs.

\begin{defin}\label{D:Wedge}
Let $\operatorname H_1$ and $H_2$ be hypergraphs with distinguished vertices $v$ and $w$.  Define the \emph{wedge} of $\operatorname H_1$ and $H_2$, denoted by $\operatorname H_1\vee H_2$, to be the the union of $\operatorname H_1$ and $H_2$, except that one of $v$ or $w$ is removed (and the remaining one is relabeled if necessary).
\end{defin}


Geometrically, $\operatorname H_1\vee H_2$ is obtained as a quotient space
$$
\operatorname H_1\vee H_2 = (\operatorname H_1\sqcup H_2)/(v\sim w),
$$
where $\sqcup$ denotes disjoint union. Thus $\operatorname H_1\vee H_2$ is simply the space obtained by identifying $v$ and $w$, i.e.~``attaching'' $\operatorname H_1$ and $H_2$ along these points.

One can also attach hypergraphs along common subhypergraphs in a more general construction called the \emph{pushout}. 

Another general construction is the \emph{join}, obtained by taking the union of $\operatorname H_1$ and $H_2$ as well as the unions of the elemens of all the possible pairs of hyperedges from $\operatorname H_1$ and $H_2$. A special case is

%

\begin{defin}\label{D:Cone} Suppose $\{c\}$ is a hypergraph consisting of a single vertex. The \emph{cone} on a hypergraph $H$ is the hypegraph $CH$ given by 
$$
CH= H\sqcup \{c\}\sqcup \{e\cup c\colon e\in H\}.
$$
Vertex $c$ is the \emph{cone vertex} or the \emph{cone point}.  
\end{defin}

In terms of realizations, the cone is the quotient $(H\times I)/(H\times \{1\})$, where $I=[0,1]$.

\begin{example}
Figure \ref{fig:cone} gives an illustration of the cone construction.  All three vertices from \( H \) are connected to the cone point \( c \) by the edges \(\{ v_1, c\}\), \(\{ v_2,c \}\), and \(\{ v_3,c \}\). The edges \(\{ v_1, v_2 \}\) and \(\{ v_1,v_3 \}\) of \( H \) form hyperedges \( \{ v_1, v_2, c \}\) and \(\{ v_1,v_3,c \}\) in \( CH \). Since the edge \(\{ v_2,v_3\} \) is missing in \( H \),  the face \(\{ v_2, v_3, c \}\) is also absent. The hyperedge \(\{v_1, v_2, v_3\}\) of \( H \) leads to the hyperedge \(\{v_1, v_2, v_3, v_4\}\) in \( CH \), and as a result, the tetrahedron is solid.

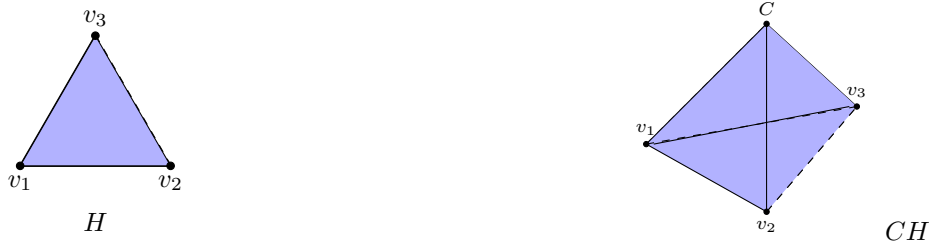
\begin{figure}[htbp]
  \begin{minipage}{0.45\textwidth}
    \centering
    \begin{tikzpicture}
      \coordinate (v1) at (0,0);
      \coordinate (v2) at (2,0);
      \coordinate (v3) at (1,1.732); 
      
      \draw[thick] (v1) -- (v2);
      \draw[thick] (v1) -- (v3);
      
      \draw[thick, dashed] (v2) -- (v3);
  
      \filldraw[fill=blue!30, opacity=0.5] (v1) -- (v2) -- (v3) -- cycle;
  
      \foreach \v in {v1, v2, v3} {
          \draw[fill=black] (\v) circle (0.05);
      }
  
      \node at (v1) [below] {$v_1$};
      \node at (v2) [below] {$v_2$};
      \node at (v3) [above] {$v_3$};
    \end{tikzpicture}\\
    \( H \)
  \end{minipage}
  \hfill
  \begin{minipage}{0.45\textwidth}
    \centering
    \begin{tikzpicture}

\draw[dashed] (-1.6,-0.1) -- (0,1.5);
\draw (0,1.5) -- (1.2,0.4);
\draw[dashed] (0,-1) -- (1.2,0.4);

\fill[fill=blue!30, opacity=0.5] (-1.6,-0.1) -- (0,1.5) -- (0,-1) -- cycle;
\fill[fill=blue!30, opacity=0.5] (-1.6,-0.1) -- (0,1.5) -- (1.2,0.4) -- cycle;
\fill[fill=blue!30, opacity=0.5] (-1.6,-0.1) -- (0,-1) -- (1.2,0.4) -- cycle;

\draw (-1.6,-0.1) -- (0,1.5);
\draw[dashed] (0,-1) -- (1.2,0.4);
\draw (1.1,0.4) -- (-1.5,-0.1) ;
\draw (-1.6,-0.1) -- (0,-1); 
\draw (0,-1) -- (0,1.5); 
\draw[dashed] (-1.6,-0.1) -- (1.2,0.4);

\node[above] at (-1.6,-0.1) {\scriptsize $v_1$};
\node[above] at (0,1.5) {\scriptsize $C$};
\node[above] at (1.2,0.4) {\scriptsize $v_3$};
\node[below] at (0,-1) {\scriptsize $v_2$};

\foreach \coord in {(-1.6,-0.1), (0,1.5), (1.2,0.4), (0,-1)}
    \filldraw[black] \coord circle (1pt);

\end{tikzpicture}
    \( CH \)
  \end{minipage}
  \caption{Illustration of the cone operation.}
  \label{fig:cone}
\end{figure}
\end{example}

\subsection{Hypergraph homology}\label{S:Homology}


The standard way to arrive at the homology of a hypergraph $H$ is to regard it as a poset (ordered by inclusion) and then construct the nerve of that poset. The homology of the resulting space is the homology of the $H$. It turns out that this homology is isomorphic to the standard homology of the simplicial complex $K_H$ obtained by the closure of the hypergraph.

However, we find this definition insufficient for our purposes because we wish for homology to ``remember'' which faces are missing. Instead, a more relevant notion of the homology of hypergraphs is that of \emph{embedded homology} \cite{BLRW} (which generalizes \emph{path homology} \cite{GLMY:PathHomology}). In brief, the chain complex that computes the embedded homology of $H$ is the smallest subcomplex of the singular chain complex for $K_H$ that contains the graded group generated by the hyperedges of $H$. In case $H$ is a simplicial complex, this produces the usual homology. But $H$ and $K_H$ can in general have different homology groups, as illustrated in the example below. We will denote the $n$th embedded homology o a hypergraph $H$ by $\operatorname H^{emb}_n(H)$.

\begin{example} Figure \ref{fig:closure} shows a hypergraph $H$ and its closure $K_H$. It turns out that
$\operatorname H^{emb}_0(H) = \mathbb{Z} \oplus \mathbb{Z}$ while  $\operatorname H_0(K_H) = \mathbb{Z}$ (see \cite{BLRW} for details). Adding one of the two missing edges to $H$ would change its 0th homology to $\Z$ (as calculated in \cite[Example 3.7]{GWXW}). Intuitively, this makes sense since contracting the single edge in $H$ would produce two vertices, i.e.~two componenents, while both $H$ with an added edge and $K_H$ are contractible and hence consist of one component.

\begin{figure}[htbp]
  \begin{minipage}{0.45\textwidth}
    \centering
    \begin{tikzpicture}
      \coordinate (v1) at (0,0);
      \coordinate (v2) at (2,0);
      \coordinate (v3) at (1,1.732); 
      
      \draw[thick] (v1) -- (v2);
      
      \draw[thick, dashed] (v1) -- (v3);
      \draw[thick, dashed] (v2) -- (v3);
  
      \filldraw[fill=blue!30, opacity=0.5] (v1) -- (v2) -- (v3) -- cycle;
  
      \foreach \v in {v1, v2, v3} {
          \draw[fill=black] (\v) circle (0.05);
      }
  
      \node at (v1) [below] {$v_1$};
      \node at (v2) [below] {$v_2$};
      \node at (v3) [above] {$v_3$};
    \end{tikzpicture}\\
    \( H \)
  \end{minipage}
  \hfill
  \begin{minipage}{0.45\textwidth}
    \centering
    \begin{tikzpicture}
    \coordinate (v1) at (0,0);
      \coordinate (v2) at (2,0);
      \coordinate (v3) at (1,1.732); 
      
      \draw[thick] (v1) -- (v2);
      \draw[thick] (v1) -- (v3);
      \draw[thick] (v2) -- (v3);
  
      \filldraw[fill=blue!30, opacity=0.5] (v1) -- (v2) -- (v3) -- cycle;
  
      \foreach \v in {v1, v2, v3} {
          \draw[fill=black] (\v) circle (0.05);
      }
  
      \node at (v1) [below] {$v_1$};
      \node at (v2) [below] {$v_2$};
      \node at (v3) [above] {$v_3$};
    \end{tikzpicture}\\
    \( K_H \)
  \end{minipage}
  \caption{}
\label{fig:closure}
\end{figure}
\end{example}


\section{Modeling political structures with hypergraphs}\label{S:Model}


The premise of \cite{MV:Politics} is that, given a collection of \emph{agents} in a political system (voters, political parties, members of a board of directors, etc.), we can regard those as vertices of a simplicial complex and simplices as potential coalitions. The presence of a simplex indicates that the agents represented by its vertices are willing to negotiate or vote the same way on a resolution. Various topological ideas and constructions on simplicial complexes then have interpretations and consequences for the functioning of the political system.

In this section, we extend much of what was done in \cite{MV:Politics} to hypergraphs. One restrictive assumption in that paper is that any subset of agents in a viable configuration also forms a viable configuration. This had to be true as simplicial complexes require closure under subsets. Current extension to hypergraphs eliminates this assumption, as it should since there is no guarantee a coalition will stay intact after one or more agents leave it. If an agent departs a coalition, other agents might do so as well or the entire coalition might fall apart. Dropping the requirement that every subset of a hyperedge is also a hyperedge allows for this possibility.

%
%


\subsection{Political structures}\label{S:Dictionary}


This section is a straightforward extension of the basic dictionary set up in \cite{MV:Politics}. We follow the terminology from that paper.

%
%
%
%

We make the following definition after \cite[Definition 1]{AK:Conflict} and \cite[Definition 3.1]{MV:Politics}.

\begin{defin}\label{D:PoliticalStructure}
A \emph{political structure} \( P \) is an ordered pair \( P = (A, \mathcal{C}) \), where
\begin{itemize}
    \item \( A \) is a finite set whose elements are called \emph{agents};
    \item \( \mathcal{C} \) is a collection of non-empty subsets of \( A \) called \emph{viable configurations} which includes all elements of \( A \), i.e., every agent \( a \in A \) forms a viable configuration \( \{a\} \) on their own.
\end{itemize}
\end{defin}

A correspondence between political structures and hypergraphs should be clear; vertices are agents and hyperedges are viable configurations, which should be thought of a coalitions. From now on, we will blur the disctinction between the two and will use the terms political structure and hypergraph interchangeably.

A \emph{map} $P\to Q$ between political structures is simply a hypergraph map of underlying hypergraphs. Such a map might capture a change, represent a transformation of the system after an event, a consolidation of agents, or an embedding of one structure into another. In particular, agents are sent to agents, but not necessarily in an injective of surjective way, meaning that multiple agents might have been consolidated or merged, or that new agents have been introduced. Since hyperedges map to hyperedges, viable configurations are mapped to viable configuration, indicating that, if agents were compatible before, they are still compatible after the event modelled by the map.

If there are $k$ agents and all configurations are viable, then we get the complete hypergraph or a $(k-1)$-simplex $\Delta^{k-1}$. We say the structure is in this case \emph{fully viable}. If no coalitions are possible, we get a 0-dimensional hypergraph consisting only of $k$ vertices and no other hyperedges. If all viable configurations are of size 2, the structure is modeled by an ordinary graph.

%
%
%
%
%

We can interpret various other definitions from Section \ref{S:Hypergraphs} in the context of political structures.

\begin{defin}\label{D:PoliticalSystemStuff}\ 

\begin{itemize}
\setlength\itemsep{4pt}
\item The \emph{dimension} of a political structure is the size of its largest viable configuration. 
\item A viable configuration is \emph{maximal} if it is not contained in any other configuration. 
\item An agent $a_j$ is \emph{more central} than agent $a_i$ if $a_i$ is dominated by $a_j$ in the corresponding hypergraph. 
\end{itemize}

\end{defin}

%


\subsection{(Local) viability and stability}\label{S:Stability}


One of the main goals of modeling political structures by hypergraphs (or simplicial complexes) is that their topology, geometry, and combinatorics  allow us to define and quantify how stable the system is and how well-positioned its agents are individually or in coalitions. This can then in turn be used for deciding how the structure can be further stabilized or where the most urgent interventions are.

To make this precise, we provide the following three definitions. The notion of local viability is new, while the other two appear in \cite[Definition 3.8]{MV:Politics} and are modified appropriately to hypergraphs here.

%


\begin{defin}\label{D:Viability}
Given a political structure $P=(A,\mathcal C)$ with the set of agents $A=\{a_1, ..., a_k\}$, $k\geq 2$, and the set of hyperedges $\mathcal C$, define the 
\begin{itemize}

\item \emph{viability of $a_i$} to be
\begin{equation*}\label{E:Viability}
\operatorname{via}(a_i)=\frac{|\operatorname{N}(a_i)|-1}{2^{k-1}-1}; 
\end{equation*}

\item 
\emph{local viability of $a_i$}, where $a_i$ is not an isolated agent (i.e.~$|\operatorname{N}(a_i)|>1$) to be
\begin{equation*}\label{E:LocalViability}
\operatorname{lvia}(a_i) = \frac{|\operatorname{N}(a_i)|-1}{2^{\ac(a_i)}-1};
\end{equation*}
In case $a_i$ is isolated, define $\operatorname{lvia}(a_i) = 0$.
\item \emph{stability of $P$} to be
\begin{equation*}\label{E:Stability}
\operatorname{stab}(P)=\frac{|\mathcal C|-k}{2^{k}-k-1}. 
\end{equation*}
\end{itemize}
\end{defin}

The viability of $a_i$ is determined by the number of hyperedges that contain $a_i$ as a vertex,  normalized to range between 0 and 1. A higher value of $\operatorname{via}(a_i)$ means that $a_i$ has many coalition options and is well-connected or well-trusted in the system. An isolated agent has viability 0. On the other hand, an agent who forms every possible coalition with other agents has viability 1. The minimal hypergraph for which this is possible is the the $(k-1)$-simplex $\Delta^{k-1}$ with the face $\{a_1, ..., \widehat{a_i}, ..., a_k\}$ removed (where $\widehat{a_i}$ indicates the omission of $a_i$). Thus if $P$ is represented by  $\Delta^{k-1}$, i.e.~it is fully viable, every agent has viability 1.


The local viability of $a_i$ again counts the number of hyperedges that include $a_i$ as a vertex, but now normalized relative to the number of vertices in its neighborhood. Since there is no dependence on $k$, this provides a more local metric compared to viability. Low local viability indicates that an agent prefers or is only able to form small coalitions. For example, if an agent only forms coalitions with one other agent at a time (so its neighborhood looks like a graph), local viability is 0.
Higher local viability indicates an agent's willingness to form many coalitions with its adjacencies, demonstrating more flexibility. As an extreme, we have

\begin{prop}\label{P:MaxAdapt}
The local viability of an agent $a$ is 1 if and only if the agent forms all possible coalitions with its adjacencies. 
\end{prop}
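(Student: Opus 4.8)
The plan is to unwind the definition of local viability and observe that the statement reduces to a simple counting identity about subsets of the neighborhood. Write $d = \ac(a)$ for the adjacency count of $a$, so that the set of vertices adjacent to $a$, together with $a$ itself, has $d+1$ elements; call this set $U$. By definition, $\operatorname{lvia}(a) = (|\operatorname{N}(a)| - 1)/(2^{d} - 1)$, and since $a$ is assumed non-isolated, $d \geq 1$ and the denominator is a positive integer. So $\operatorname{lvia}(a) = 1$ if and only if $|\operatorname{N}(a)| - 1 = 2^{d} - 1$, i.e. $|\operatorname{N}(a)| = 2^{d}$.

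First I would make precise what ``forms all possible coalitions with its adjacencies'' means: every subset $S \subseteq U$ with $a \in S$ is a hyperedge of $H$. There are exactly $2^{d}$ such subsets (choose any subset of the $d$ neighbors to accompany $a$). Next I would show the forward containment that is automatic: every hyperedge $h \in \operatorname{N}(a)$ contains $a$ by definition of the hyperedge neighborhood, and every other vertex of $h$ is adjacent to $a$ (they share the hyperedge $h$), so $h \subseteq U$ and $a \in h$. Hence $\operatorname{N}(a)$ is always a subset of the collection of all $2^{d}$ subsets of $U$ containing $a$, giving $|\operatorname{N}(a)| \leq 2^{d}$ unconditionally, and equality holds precisely when $\operatorname{N}(a)$ is that entire collection — which is exactly the condition that $a$ forms all possible coalitions with its adjacencies.

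Putting these together: $\operatorname{lvia}(a) = 1 \iff |\operatorname{N}(a)| = 2^{d} \iff \operatorname{N}(a)$ equals the set of all subsets of $U$ containing $a$ $\iff$ $a$ forms every possible coalition with its adjacencies. I would also note, for the reader's comfort, that since $|\operatorname{N}(a)| \leq 2^{d}$ always, $\operatorname{lvia}(a) \leq 1$ always, so $1$ is genuinely the maximum, justifying the word ``extreme'' in the surrounding text.

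I do not anticipate a serious obstacle here; the only subtlety worth stating carefully is the claim that a hyperedge containing $a$ cannot contain a vertex non-adjacent to $a$ — but this is immediate from the definition of adjacency (two vertices are adjacent if \emph{some} hyperedge contains both, and here that hyperedge is $h$ itself). A secondary point to handle cleanly is that the set $U$ has cardinality $d+1$ rather than $d$, so that the relevant count of subsets-containing-$a$ is $2^{d}$ and not $2^{d+1}$ or $2^{d-1}$; getting this bookkeeping right is what makes the denominator $2^{\ac(a)} - 1$ line up exactly with the bound.
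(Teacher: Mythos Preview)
Your proposal is correct and follows essentially the same approach as the paper's proof: both reduce the statement to the counting identity $|\operatorname{N}(a)| = 2^{\ac(a)}$ and then observe that this holds precisely when $a$ forms a hyperedge with every subset of its adjacencies. Your version is simply more explicit, spelling out why every element of $\operatorname{N}(a)$ must lie inside $U$ and hence why $|\operatorname{N}(a)| \leq 2^{\ac(a)}$ always holds, whereas the paper compresses this into a single sentence.
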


\begin{proof}
Local viability of an agent $a$ is 1 if and only if $|N(a)|=2^{|\ac(a)|}$, and this is true if an only if $a$ forms a hyperedge with every subset of its adjancencies. 
\end{proof}

One way to think about the above is that, if local viability is 1, then $a$ and its adjacencies form a $\Delta^{|\ac(a)|}$, possibly with the face determined by the adjacencies alone (the face ``opposite'' $a$) and some or all of its subfaces missing. If any of the other faces are missing, $|N(a)|$ decreases and local viability is strictly less than 1. In the case when $H$ is a simplicial complex, this means that an agent has local viability 1 if they belong to exactly one maximal coalition.

We will make some further comments about local viability and its relation to viability after working out some examples.

The stability of $P$ is the  total number of hyperedges of $P$, again normalized. This is essentially the sum of the viabilities of all the agents, but taking into account the overcount of hyperedges. Greater $\operatorname{stab}(P)$ indicates more compatibilities among agents and more willingness to form coalitions. As one would expect, a fully viable system (simplex) has stability 1 while the system with no coalitions has viability 0.


\begin{example}\label{ex:localviability}
Consider political structures \(P_1, P_2, P_3, P_4\), each containing five agents \(a_1, a_2, a_3, a_4, a_5\). In each, the neighborhood \(N(a_1)\) of $a_1$ has 4 elements:

\begin{align*}
\text{In } P_1: & \ N(a_1) = \{\{a_1\}, \{a_1, a_2\}, \{a_1, a_3\}, \{a_1, a_2, a_3\}\} \\
\text{In } P_2: & \ N(a_1) = \{\{a_1\}, \{a_1, a_2\}, \{a_1, a_2, a_3\}, \{a_1, a_4\}\} \\
\text{In } P_3: & \ N(a_1) = \{\{a_1\}, \{a_1, a_2, a_3\}, \{a_1, a_4, a_5\}, \{a_1, a_2, a_3, a_4, a_5\}\} \\
\text{In } P_4: & \ N(a_1) = \{\{a_1\}, \{a_1, a_2\}, \{a_1, a_3\}, \{a_1, a_4, a_5\}\}
\end{align*}
%
%

The viability of \(a_1\) is the same in all four, since, in each structure,
\[
\operatorname{via}(a_1) = \frac{4-1}{{2^{5-1}-1}} = \frac{1}{5}.
\]

However, the local viability of \(a_1\) differs:

In \(P_1\)\ : \(\ac(a_1) = 2\), and \(\operatorname{lvia}(a_i)\) is 1. This makes sense in light of Proposition \ref{P:MaxAdapt} because the neighborhood of \(a_1\) consists of all possible hyperedges that contain one or both of $a_2$ and  $a_3$.

In \(P_2\)\ : \(\ac(a_1) = 3\), and \(\operatorname{lvia}(a_1)\) is \(3/7\).

In \(P_3\)\ : \(\ac(a_1) = 4\), and \(\operatorname{lvia}(a_1)\) is \(1/5\). Note that this value is the same as \(a_1\)'s viability. This is because $a_1$ is connected to all other agents so the local and global measures coincide.

In \(P_4\)\ : \(\ac(a_1) = 4\), and \(\operatorname{lvia}(a_1)\) is \(1/5\).


\end{example}


While local viability offers more insight than viability in this example, it still cannot distinguish between the cases of  \(P_3\) and \(P_4\). To further detect differences between \(P_3\) and \(P_4\), one might consider bringing in the $s$-vector of a vertex.

The following example illustrates that viability, with its global view of the position of an agent, can also provide a more nuanced view than local viability. 


\begin{example}\label{ex:localviability2}
Consider these political structures:

\begin{align*}
 P_1 & = \{\{a_1\}, \{a_2\}, \{a_3\}, \{a_1, a_3\}, \{a_1, a_2, a_3\}\}\\
P_2  & = \{\{a_1\}, \{a_2\}, \{a_3\}, \{a_4\}, \{a_5\}, \{a_1, a_3\}, \{a_1, a_2, a_3\}, \{a_3, a_4\}, \{a_3, a_5\}, \{a_4, a_5\}, \{a_2, a_3, a_5\}\}
\end{align*}

The local viability of \(a_1\) is equal in the two structures since the neighborhoods of \(a_1\) are the same. 

However, their viabilities are:

In \(P_1\): \(\operatorname{via}(a_1) = \frac{{3-1}}{{2^{3-1} - 1}} = \frac{2}{3}\).

In \(P_2\): \(\operatorname{via}(a_1) = \frac{{3-1}}{{2^{5-1} - 1}} = \frac{2}{15}\).

%
%

\end{example}

Small viability indicates a low overall level of interaction between an agent and the other agents. If local viability is simultaneously high, this agent forms many coalitions, but with a small subset of agents. This might point to the existance of a clique or a faction within a political structure with the agent in question as the ringleader.

%

\begin{example}
For a complete hypergraph $\Delta^{k-1}$, \( |\operatorname{N}(a_i)| = 2^{k-1} \) for all \( a_i \) so the viability of each agent is 1. The adjacency count of each agent is \( k-1 \), so the local viability of each agent is also 1. Since the number of hyperedges in a complete hypergraph is \( 2^k - 1 \), the stability  is also 1.

At the other extreme is the situation where the structure consists of $k$ isolated vertices so no agents are compatible. Now \( |\operatorname{deg}(a_i)| = 0 \) for all \( a_i \), and the viability of each agent is 0. For any agent, \( |\operatorname{N}(a_i)| = 1 \), as the hyperedge containing the vertex itself is the sole hyperedge in its neighborhood. Consequently, the local viability of each agent is 0. Since the only hyperedges are those consisting of a single vertex, \( |P| = |V(P)| = k \) and so the stability of \( P \) is 0.
\end{example}


\subsection{Operations on political structures}\label{S:PoliticsMods}


Hypergraph constructions from Section \ref{S:HypergraphOperations} carry over into the setting of political structures in a straightforward way. This section explains how and establishes some results about the interplay between the notions of the previous section and these constructions.

Suppose a political structure \( P \) with \( k\geq 2 \) agents consists of two distinct path-connected components, $P_1$ and $P_2$, so $P=P_1\amalg P_2$. Suppose an agent $a_0\in P_1$ and an agent $b_0\in P_2$ decide to merge or join forces, effectively becoming a single agent. If we rename the newly formed agent by $a$, we then have a situation that corresponds to the wedge of hypergraphs $P_1\vee P_2$ (see \refD{Wedge}). 

\begin{prop}\label{P:WedgeViability}
If agents are merged from two path-connected components of a political structure, the viability of the newly formed agent is greater than that of each agent individually. In fact, 
$$
\operatorname{via}(a)>\operatorname{via}(a_0)+\operatorname{via}(b_0).
$$
\end{prop}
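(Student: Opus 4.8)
The plan is to unwind the definitions of the quantities involved and reduce everything to a statement about cardinalities of neighborhoods. Write $P_1$ on agent set $A_1$ with $|A_1| = k_1$, and $P_2$ on $A_2$ with $|A_2| = k_2$, so that after the merge the wedge $P = P_1 \vee P_2$ has $k = k_1 + k_2 - 1$ agents. The key combinatorial observation is that the hyperedges of $P_1 \vee P_2$ containing the glued vertex $a$ are exactly those hyperedges of $P_1$ containing $a_0$, together with those of $P_2$ containing $b_0$, and these two families overlap only in the singleton $\{a\}$ (identified with $\{a_0\}$ and $\{b_0\}$). Hence
\begin{equation*}
|\operatorname{N}(a)| - 1 = \bigl(|\operatorname{N}(a_0)| - 1\bigr) + \bigl(|\operatorname{N}(b_0)| - 1\bigr).
\end{equation*}
So the numerators add exactly; the whole content of the proposition is about what the normalizing denominators do.

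Next I would set $m_0 = |\operatorname{N}(a_0)| - 1 \ge 0$ and $n_0 = |\operatorname{N}(b_0)| - 1 \ge 0$, so that $\operatorname{via}(a_0) = m_0/(2^{k_1 - 1} - 1)$, $\operatorname{via}(b_0) = n_0/(2^{k_2 - 1} - 1)$, and $\operatorname{via}(a) = (m_0 + n_0)/(2^{k} - 1)$ with $k = k_1 + k_2 - 1$. The desired inequality $\operatorname{via}(a) > \operatorname{via}(a_0) + \operatorname{via}(b_0)$ becomes
\begin{equation*}
\frac{m_0 + n_0}{2^{k_1 + k_2 - 2} - 1} > \frac{m_0}{2^{k_1 - 1} - 1} + \frac{n_0}{2^{k_2 - 1} - 1}.
\end{equation*}
Since $2^{k_1 + k_2 - 2} - 1 = (2^{k_1-1})(2^{k_2-1}) - 1 < (2^{k_1-1} - 1)(2^{k_2-1} - 1) \cdot \tfrac{1}{?}$ — more carefully, the point is that $2^{k_1-1} - 1 > \tfrac{1}{2}(2^{k_1+k_2-2}-1)$ fails in general, so I would instead argue termwise: it suffices to show $2^{k_1-1} - 1 \le 2^{k_1+k_2-2} - 1$ and $2^{k_2-1} - 1 \le 2^{k_1+k_2-2} - 1$, with at least one inequality strict when the corresponding numerator is nonzero. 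Both hold because $k_2 \ge 1$ forces $2^{k_1-1} \le 2^{k_1+k_2-2}$, and strictness holds as soon as $k_2 \ge 2$ (which is automatic since $P_2$ has at least the agent $b_0$ and, being a nontrivial component to merge, at least one more — or one handles the degenerate case separately). Then $\operatorname{via}(a_0) = m_0/(2^{k_1-1}-1) \le m_0/(2^{k-1}-1)$ and similarly for $b_0$, and summing gives $\operatorname{via}(a_0) + \operatorname{via}(b_0) \le (m_0+n_0)/(2^{k-1}-1) = \operatorname{via}(a)$, with strict inequality provided $m_0 + n_0 > 0$, i.e. $a$ is not isolated.

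The main obstacle — really the only subtlety — is pinning down the hypotheses under which strict inequality genuinely holds: one needs both $k_1, k_2 \ge 2$ (so each component has a genuine agent distinct from the merge point, which is reasonable since "path-connected component" being merged suggests it is nontrivial) and $|\operatorname{N}(a)| > 1$ (otherwise all three viabilities are $0$ and the inequality is an equality). I would state these as running assumptions, remark that the edge cases are uninteresting, and note that the strictness ultimately comes from the gap $2^{k-1} - 1 > 2^{k_i - 1} - 1$, which is the precise quantitative way of saying that merging enlarges the ambient space of possible coalitions faster than it enlarges any single agent's coalition count. I would also double-check the claim that $\operatorname{via}(a_0) + \operatorname{via}(b_0) < \operatorname{via}(a)$ and not merely $\le$: since at least one of $k_1, k_2$ exceeds... actually since $k = k_1 + k_2 - 1 > k_i$ strictly whenever $k_j \ge 2$, and this holds for at least one $j$, the denominator for $a$ is strictly larger than at least one of the denominators for $a_0, b_0$, giving strictness as long as the matching numerator is positive; a one-line case analysis on which of $m_0, n_0$ is positive closes this.
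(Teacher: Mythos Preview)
Your proposal misreads the setup, and this is fatal. In the paper, $P_1$ and $P_2$ are two path-connected components of a \emph{single} political structure $P$ with $k$ agents in total. Viability is always computed relative to the ambient structure, so both $\operatorname{via}(a_0)$ and $\operatorname{via}(b_0)$ have the \emph{same} denominator $2^{k-1}-1$; after the merge there are $k-1$ agents and $\operatorname{via}(a)$ has denominator $2^{k-2}-1$. You instead treat $P_1$ and $P_2$ as separate structures of sizes $k_1$ and $k_2$ and give $a_0$, $b_0$ the denominators $2^{k_1-1}-1$, $2^{k_2-1}-1$. Under that reading the proposition is actually \emph{false}: take $P_1=\{\{a_0\},\{x\},\{a_0,x\}\}$ and $P_2=\{\{b_0\},\{y\},\{b_0,y\}\}$; then in your framework $\operatorname{via}(a_0)=\operatorname{via}(b_0)=1$, while after merging $\operatorname{via}(a)=2/3<2$. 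The paper even remarks, just after this proposition, that the ``separate structures merged'' scenario is a distinct situation with no easy relationship.

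There is also an algebra slip that hides this. You write that $2^{k_1-1}-1\le 2^{k-1}-1$ implies $m_0/(2^{k_1-1}-1)\le m_0/(2^{k-1}-1)$; the inequality goes the other way, since a smaller denominator makes the fraction larger. Fixing the sign would have led you to $\operatorname{via}(a)\le \operatorname{via}(a_0)+\operatorname{via}(b_0)$ in your setup, which should have been a signal that the setup was wrong.

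Once you use the correct common denominator, the argument is a one-liner: with $m_0=|\operatorname N(a_0)|-1$ and $n_0=|\operatorname N(b_0)|-1$ you have
\[
\operatorname{via}(a_0)+\operatorname{via}(b_0)=\frac{m_0+n_0}{2^{k-1}-1},\qquad
\operatorname{via}(a)=\frac{m_0+n_0}{2^{k-2}-1},
\]
and $2^{k-2}-1<2^{k-1}-1$ for $k>2$ gives the strict inequality (with your caveat $m_0+n_0>0$). This is exactly the paper's computation, which packages the ratio as $\operatorname{via}(a)=\dfrac{2^{k-1}-1}{2^{k-2}-1}\bigl(\operatorname{via}(a_0)+\operatorname{via}(b_0)\bigr)$.
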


\begin{proof} From the definition of viability, $|N(a_0)| = (2^{k-1} -1 )\cdot \operatorname{via}(a_0)  + 1$. Similarly for $b_0$. By definition of the wedge, \(|\operatorname{N}(a)| = |\operatorname{N}(a_0)| + |\operatorname{N}(b_0)| - 1\), where the subtraction of 1 is because the hyperedge containing only the wedge vertex itself should be  counted only once. Additionally, the number of agents decreases by \(1\) after the wedge operation.

Hence
\begin{align*}
\operatorname{via}(a) &= \frac{1}{2^{k-2}-1} \cdot (|\operatorname{N}(a_0)| + |\operatorname{N}(b_0)| - 2) \\
&= \frac{1}{2^{k-2}-1} \cdot \left((2^{k-1} -1 )\cdot \operatorname{via}(a_0) + 1 + (2^{k-1} -1 ) \cdot \operatorname{via}(b_0) + 1 - 2\right) \\
&= \frac{1}{2^{k-2}-1} \cdot \left((2^{k-1} -1 )\cdot (\operatorname{via}(a_0) + \operatorname{via}(b_0)) \right) \\
&= \frac{2^{k-1} - 1}{2^{k-2} - 1}\cdot (\operatorname{via}(a_0) + \operatorname{via}(b_0))
\end{align*}
Since \(\frac{2^{k-1} - 1}{2^{k-2} - 1} > 1\) when \( k > 2 \), it follows that \(\operatorname{via}(a) > \operatorname{via}(a_0) + \operatorname{via}(b_0)\).
\end{proof}

For local viability, the situation is different. The discrepacy comes from the fact that $k$ is the same for both $a_0$ and $b_0$ in the above, while the adjacency count for the two can be different.

\begin{prop}\label{P:WedgeLocalViability}
If agents are merged from two path-connected components of a political structure, the local viability satisfies
\[
\operatorname{lvia}(a) = \frac{2^{\ac(a_0)} - 1}{2^{\ac(a_0) + \ac(b_0)}-1} \cdot \operatorname{lvia}(a_0) + \frac{2^{\ac(b_0)} - 1}{2^{\ac(a_0) + \ac(b_0)}-1} \cdot \operatorname{lvia}(b_0).
\]
In particular, since the two fractions are less than 1 for $\ac(a_0), \ac(b_0)\geq 1$, we have
$$
\operatorname{lvia}(a) <\operatorname{lvia}(a_0) + \operatorname{lvia}(b_0). 
$$
\end{prop}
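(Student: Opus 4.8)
The plan is to compute $\operatorname{lvia}(a)$ directly from the definition and recognize the result as the claimed convex-type combination of $\operatorname{lvia}(a_0)$ and $\operatorname{lvia}(b_0)$. The two ingredients I need are (i) how the neighborhood $\operatorname{N}(a)$ of the wedge vertex decomposes, and (ii) how the adjacency count $\ac(a)$ behaves under the wedge. For (i): exactly as in the proof of Proposition~\ref{P:WedgeViability}, since $P_1$ and $P_2$ share only the identified vertex, every hyperedge of $\operatorname{N}(a)$ comes either from $\operatorname{N}(a_0)$ in $P_1$ or from $\operatorname{N}(b_0)$ in $P_2$, and the only overlap is the singleton $\{a\}$, so $|\operatorname{N}(a)| = |\operatorname{N}(a_0)| + |\operatorname{N}(b_0)| - 1$. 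For (ii): since no vertex of $P_1\setminus\{a_0\}$ is adjacent to any vertex of $P_2\setminus\{b_0\}$ (the components are disjoint apart from the wedge point), the vertices adjacent to $a$ are precisely those adjacent to $a_0$ in $P_1$ together with those adjacent to $b_0$ in $P_2$, and these sets are disjoint, giving $\ac(a) = \ac(a_0) + \ac(b_0)$.

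Next I would assemble these. Writing $p = \ac(a_0)$ and $q = \ac(b_0)$, the definition of local viability gives
\[
\operatorname{lvia}(a) = \frac{|\operatorname{N}(a)|-1}{2^{p+q}-1} = \frac{(|\operatorname{N}(a_0)|-1) + (|\operatorname{N}(b_0)|-1)}{2^{p+q}-1}.
\]
Now substitute $|\operatorname{N}(a_0)| - 1 = (2^{p}-1)\operatorname{lvia}(a_0)$ and $|\operatorname{N}(b_0)| - 1 = (2^{q}-1)\operatorname{lvia}(b_0)$, which is just the defining relation for local viability rearranged (valid because $a_0$ and $b_0$ are non-isolated, so the denominators $2^p-1$, $2^q-1$ are nonzero). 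This yields
\[
\operatorname{lvia}(a) = \frac{2^{p}-1}{2^{p+q}-1}\operatorname{lvia}(a_0) + \frac{2^{q}-1}{2^{p+q}-1}\operatorname{lvia}(b_0),
\]
which is exactly the claimed identity.

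For the inequality, it suffices to observe that each coefficient is strictly less than $1$: since $q \geq 1$ we have $2^{p+q} - 1 > 2^p - 1 > 0$, so $\frac{2^p-1}{2^{p+q}-1} < 1$, and symmetrically for the other coefficient. Hence $\operatorname{lvia}(a) < \operatorname{lvia}(a_0) + \operatorname{lvia}(b_0)$. (One should also note the trivial edge cases: if $a$ ends up isolated the statement is vacuous, and the hypothesis that $a_0,b_0$ are merged from genuine path-connected components with at least an edge each is what guarantees $p,q\geq 1$; if, say, $a_0$ were isolated in $P_1$ the convention $\operatorname{lvia}(a_0)=0$ would need separate handling, but then $a_0$ contributes nothing to $\operatorname{N}(a)$ beyond its singleton and the formula degenerates appropriately.)

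I do not anticipate a genuine obstacle here; the only point requiring care is the bookkeeping in claim (ii) — verifying that $\ac$ is additive under the wedge, i.e.\ that no spurious adjacencies are created between the two components. This is where the path-connectedness/disjointness hypothesis is essential and should be stated explicitly, in contrast to the viability computation where only the count $|\operatorname{N}(a)|$ mattered and the parameter $k$ was shared. Everything else is the same algebraic manipulation as in Proposition~\ref{P:WedgeViability}, just with $2^{\ac(\cdot)}-1$ in place of $2^{k-1}-1$.
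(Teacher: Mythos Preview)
Your proof is correct and follows essentially the same route as the paper: both use $|\operatorname{N}(a)| = |\operatorname{N}(a_0)| + |\operatorname{N}(b_0)| - 1$ together with $\ac(a) = \ac(a_0) + \ac(b_0)$, then substitute the defining relation for local viability to obtain the displayed identity. Your version is slightly more explicit in justifying the additivity of $\ac$ and in handling the edge cases, but the argument is the same.
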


\begin{proof}
As in the above proof, we have $|\operatorname{N}(a_0)| = (2^{\ac(a_0)} - 1)\operatorname{lvia}(a_0) + 1$ and similarly for $b_0$. Since $\ac(a) = \ac(a_0) + \ac(b_0)$, 
\begin{align*}
\operatorname{lvia}(a) &= \frac{1}{2^{\ac(a_0) + \ac(b_0)}-1} \cdot \left( |\operatorname{N}(a_0)| + |\operatorname{N}(b_0)| - 2 \right) \\
&= \frac{1}{2^{\ac(a_0) + \ac(b_0)}-1} \cdot  \left( (2^{\ac(a_0)} - 1) \cdot \operatorname{lvia}(a_0) + 1  + (2^{\ac(b_0)} - 1) \cdot \operatorname{lvia}(b_0) + 1 - 2 \right) \\
&= \frac{2^{\ac(a_0)} - 1}{2^{\ac(a_0) + \ac(b_0)}-1} \cdot \operatorname{lvia}(a_0) 
 + \frac{2^{\ac(b_0)} - 1}{2^{\ac(a_0) + \ac(b_0)}-1} \cdot \operatorname{lvia}(b_0)
\end{align*}
\end{proof}

\begin{prop}\label{P:WedgeStabilityComponents}
Merging agents from different path-connected components of a political structure increases stability:
$$
\operatorname{stab}(P_1\amalg P_2)< \operatorname{stab}(P_1\vee P_2).
$$
\end{prop}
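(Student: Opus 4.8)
The plan is to write both stability values explicitly in terms of the numbers of hyperedges and agents in $P_1$ and $P_2$, and then reduce the inequality to an elementary comparison of fractions. Let $P_1$ have $k_1$ agents and $c_1$ hyperedges, and $P_2$ have $k_2$ agents and $c_2$ hyperedges, so $P_1\amalg P_2$ has $k=k_1+k_2$ agents and $c_1+c_2$ hyperedges. By definition,
\[
\operatorname{stab}(P_1\amalg P_2)=\frac{(c_1+c_2)-(k_1+k_2)}{2^{\,k_1+k_2}-(k_1+k_2)-1}.
\]
For the wedge $P_1\vee P_2$, the number of agents drops by one to $k_1+k_2-1$, and the number of hyperedges drops by one as well (the two singleton hyperedges $\{a_0\}$ and $\{b_0\}$ are identified into the single hyperedge $\{a\}$, and no other hyperedges are created or lost since the components were previously disjoint). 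Hence
\[
\operatorname{stab}(P_1\vee P_2)=\frac{(c_1+c_2-1)-(k_1+k_2-1)}{2^{\,k_1+k_2-1}-(k_1+k_2-1)-1}=\frac{(c_1+c_2)-(k_1+k_2)}{2^{\,k_1+k_2-1}-(k_1+k_2)}.
\]

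So the two stabilities have the same numerator $N:=(c_1+c_2)-(k_1+k_2)$, which is nonnegative (it equals $\sum_i (s_i-1)$ summed over components, or more simply it is the number of non-singleton hyperedges), and it is strictly positive exactly when at least one component has a hyperedge of size $\ge 2$; if $N=0$ both stabilities are $0$ and there is nothing to prove, so assume $N>0$. Then the inequality $\operatorname{stab}(P_1\amalg P_2)<\operatorname{stab}(P_1\vee P_2)$ is equivalent to the comparison of denominators
\[
2^{\,k_1+k_2-1}-(k_1+k_2)<2^{\,k_1+k_2}-(k_1+k_2)-1,
\]
i.e.\ to $1<2^{\,k_1+k_2-1}$, which holds since $k_1+k_2\ge 2$ (as $k\ge 2$ and both components are nonempty, in fact $k_1+k_2\ge 2$). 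This gives the strict inequality.

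The only genuinely delicate point is the bookkeeping that goes into the claim that the wedge loses exactly one hyperedge and one agent and gains nothing: this uses that $P_1$ and $P_2$ are distinct path-connected components (so they share no vertices before the identification) and that the wedge identifies $a_0$ with $b_0$, so the singleton $\{a_0\}=\{b_0\}=\{a\}$ is counted once instead of twice while every other hyperedge of $P_1$ and of $P_2$ survives intact. I would state this carefully, perhaps citing \refD{Wedge}. Everything after that is the routine fraction manipulation sketched above; one should also note the edge case $N=0$ (no coalitions of size $\ge 2$ anywhere), where the displayed strict inequality fails and is correctly replaced by equality — so it may be worth assuming in the statement, or remarking here, that the structure is not a disjoint union of isolated vertices.
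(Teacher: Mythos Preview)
Your proof is correct and follows essentially the same route as the paper: both observe that the wedge decreases the agent count and the hyperedge count by exactly one, so the numerator $|\mathcal C|-k$ is unchanged and the strict inequality reduces to comparing the denominators $2^{k}-k-1$ and $2^{k-1}-k$. Your version is slightly more careful in flagging the degenerate case $N=0$ (no non-singleton hyperedges), where both stabilities vanish and the strict inequality becomes an equality; the paper does not comment on this.
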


\begin{proof}
The stability before the merging of the agents is given by 
\[
\operatorname{stab}(P_1\amalg P_2) = \frac{1}{2^k - 1 - k} \cdot (|P_1\amalg P_2| - k)
\]
where by $|\cdot|$ we as usual mean the number of hyperedges of the structure.
After the merging, there is one fewer agent and hence one fewer hyperedge. Therefore, the stability after the merging is
\[
\operatorname{stab}(P_1\vee P_2) = \frac{1}{2^{k-1} - 1 - (k-1)} \cdot (|P_1\amalg P_2| - 1 - (k-1))
\]
or
\[
\operatorname{stab}(P_1\vee P_2) = \frac{1}{2^{k-1} - k} \cdot (|P_1\amalg P_2| - k)
\] 

Since \(2^k - 1 - k > 2^{k-1} - k\) when \(k > 2\), 
\(\frac{1}{2^k - 1 - k} < \frac{1}{2^{k-1} - k}\). 
Therefore, \(\operatorname{stab}(P_1 \amalg P_2) < \operatorname{stab}(P_1 \vee P_2)\).
\end{proof}

Two political structures can also merge along two matching substructures (subhypergraphs). This is known as the \emph{pushout} and it would be interesting to generalize Proposition \ref{P:WedgeStabilityComponents} to that situation.

A similar yet distinct situation that can also be modeled using the wedge model is that, instead of being two components of one political structure, \(P_1\) and \(P_2\) are initially two separate political structures and are then merged along two agents. There seems to be no easy relationship between the stability of the individual structures and the merged one. 

For example, if 
\begin{align*}
P_1 & =\{ \{v_1\}, \{v_2\}, \{v_3\}, \{v_1, v_2\}, \{v_2, v_3\}, \{v_1, v_2, v_3\} \}\\
P_2 & =\{ \{w_1\}, \{w_2\}, \{w_3\}, \{w_1, w_3\}, \{w_1, w_2, w_3\} \}
\end{align*}
then merging along $v_1$ and $w_1$ produces a structure whose stability is lower than that of $P_1$ and $P_2$ individually. On the other hand, if $P_1$ consists of ten agents and a single viable configuration containing all of them while $P_2$ is a 5-simplex, then the merged structure (along any two vertices) has stability greater than $P_1$ but less than $P_2$.

Sometimes we want to introduce a \emph{mediator} into a political structure $P$. This is a new agent who is willing to negotiate or enter coalitions with any existing agents of viable configurations. This is modeled by $CP$, the cone on $P$. It is easy to show that the stability in this case increases, and in fact the proof of the same result for simplicial complexes, \cite[Proposition 3.14]{MV:Politics}, applies here.

The question of introducing a mediator into a substructure of $P$ is less straightforward (and more interesting). Such a scenario could arise, for instance, when certain agents are deeply divided, necessitating separate mediation to facilitate their participation in broader discussions. This situation can be modeled by taking a cone on a subhypergraph of \( P \). In this case, the stability of the political structure does not necessarily increase, which may seem counterintuitive. The weakening occurs as the mediator strenghtens the viable configuration structure among some agents but may have the effect of isolating them from the rest of the system. For details, see Proposition 3.15 in \cite{MV:Politics} whose proof applies when simplicial complexes are replaced by hypergraphs.

\begin{rem}
In \cite{MV:Politics}, an extension to \emph{weighted} political structures is considered. This is the model for the scenario where agents are deciding their position on more than one issue and each simplex is labeled by how many issues the agents forming that coalition agree on. Weighted versions of viability and stability for hypergraphs can be defined the same way they are for simplicial complexes and all the results about weighted political structures from \cite{MV:Politics} carry over to the setting of hypergraphs in the same way.
\end{rem}


\subsection{Homology and political structures}\label{S:StructuresHomology}


In this section, we initiate a study of how embedded homology can be used to extract information about the dynamics in a political structure. One observation is that all the results of \cite[Section 3.5]{MV:Politics} hold in the case of hypergraphs and embedded homology since this homology defines cycles the same way as standard homology (the difference is that there are fewer of them in embedded homology). Thus in particular, a non-zero Betti number (rank of a homology group) points to non-viabilities among certain subsets of agents.

However,  it is also helpful to play embedded and standard homology off each other. Neither is robust enough to be sufficiently useful on its own; homology is homotopy invariant, and in much of what we do, we want to keep track of structures that are not preserved by homotopy equivalences. For example, if we know $\operatorname H^{emb}_0(H) = \mathbb{Z} \oplus \mathbb{Z}$, this means that the agents are partitioned into two subsets with no hyperedges of size 2 between them. If there are $k$ agents, we can say that $H$ is missing at least $k-1$ hyperedges from a complete hypergraphs, but that is about all the information we can extract.

But if we know both embedded and singular homology of a hypergraph, we can extract more information about the possible viable configurations in the structure. For example,  for the left image in Figure \ref{fig:closure}, $\operatorname H^{emb}_0(H) = \mathbb{Z} \oplus \mathbb{Z}$. However, this homology would be the same even if the viable configuration $\{v_1,v_2,v_3\}$ was not there. On the other hand, the standard homology of the closure of that figure (the right side of Figure \ref{fig:closure}) is $\operatorname H_0(K_H) = \mathbb{Z}$. But if the hyperedge $\{v_1,v_2,v_3\}$ was not there, then $\operatorname H_0(K_H) = \mathbb{Z}\oplus \mathbb{Z}$.  In fact, it is not hard to see that zeroth and first embedded and singular homology groups together completely determine the political structure on three agents. 

Here is a sample result of what we believe could be a fruitful line of investigation. The statement is about the 0th homology (i.e.~path-connected components), but a generalization to higher homology groups is also possible. Let $\beta^{emb}_0(H)$ and $\beta_0(K_H)$ be the 0th Betti numbers, i.e.~ranks of $\operatorname H^{emb}_0(H)$ and  $\operatorname H_0(K_H)$, respectively. 

\begin{prop} For any hypergraph $H$,
\begin{itemize}
\item $\beta^{emb}_0(H)\geq \beta_0(K_H)$. 
\item If the above is an equality, then the structure is path connected, i.e.~there is a path through hyperedges of cardinality 2 from any agent to any other agent.
\item  If the inequality is strict, then there are hyperedges of size greater than two whose union contains vertices from $\beta^{emb}_0(H)- \beta_0(K_H)$ different components of the hypergraph.


\end{itemize}

%
%

\end{prop}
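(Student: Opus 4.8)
The plan is to reduce all three bullets to elementary facts about two partitions of the agent set $V$, after one preliminary computation of embedded homology in degrees $0$ and $1$. First I would unwind the description of the embedded chain complex of $H$: writing $\Delta_n \subseteq C_n(K_H)$ for the subgroup generated by the (oriented simplices corresponding to) size-$(n+1)$ hyperedges of $H$, the embedded complex is the smallest $\partial$-closed subcomplex of $C_\bullet(K_H)$ containing $\Delta_\bullet$, so in degree $n$ it is $\Delta_n + \partial_{n+1}\Delta_{n+1}$. Here the non-standard convention that every singleton is a hyperedge does the work: $\partial_1[u,v]=[v]-[u]\in\Delta_0$ since $\{u\},\{v\}\in E$, hence $\partial_1\Delta_1\subseteq\Delta_0$, the degree-$0$ part is exactly $\Delta_0$ (the free group on $V$), and since $\partial_1\partial_2=0$ the degree-$1$ chains coming from boundaries of larger hyperedges contribute nothing to $\partial_1$. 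Therefore
\[
\operatorname{H}^{emb}_0(H)\;=\;\Delta_0/\partial_1\Delta_1\;=\;\operatorname{H}_0(G),
\]
where $G$ is the graph on vertex set $V$ whose edges are the size-$2$ hyperedges of $H$; so $\beta^{emb}_0(H)$ is the number of connected components of $G$. On the other side $\beta_0(K_H)$ is the number of components of the $1$-skeleton $G'$ of $K_H$, the graph on $V$ with an edge $\{u,v\}$ exactly when $u$ and $v$ are adjacent in $H$. Every edge of $G$ is an edge of $G'$, so the partition $\mathcal{P}$ of $V$ into $G$-components refines the partition $\mathcal{Q}$ into $G'$-components.

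The first bullet is then immediate: a refinement has at least as many blocks, so $\beta^{emb}_0(H)=|\mathcal{P}|\ge|\mathcal{Q}|=\beta_0(K_H)$. For the second bullet, a refinement with equally many blocks coincides with the coarser partition, so equality forces $\mathcal{P}=\mathcal{Q}$; that is, any two agents in the same component of $K_H$ are already joined by a chain of size-$2$ hyperedges (which is the literal statement ``path connected through $2$-element coalitions'' once one also knows $H$ itself is connected --- I would either add that as a hypothesis or state the conclusion in this blockwise form, since e.g. two disjoint triangles give equality without global path-connectedness).

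For the third bullet, set $d=\beta^{emb}_0(H)-\beta_0(K_H)>0$ and build an auxiliary graph $\Gamma$ whose nodes are the $G$-components $U_1,\dots,U_p$, joining two nodes whenever some hyperedge $e$ with $|e|\ge 3$ meets both. The key observation is that two $G$-components lie in the same component of $K_H$ if and only if their nodes are connected in $\Gamma$: size-$2$ hyperedges never connect distinct $G$-components, while all vertices of a hyperedge of size $\ge 3$ are mutually adjacent in $K_H$, so recording which $G$-components a $K_H$-edge-path visits yields a walk in $\Gamma$, and conversely. Hence $\Gamma$ has exactly $\beta_0(K_H)=p-d$ components on its $p$ nodes, so at least $d$ edges; a spanning forest of $\Gamma$ then has exactly $d$ edges, and $d$ forest edges span at least $d+1$ nodes. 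Choosing, for each forest edge, a witnessing hyperedge of size $>2$, the resulting (at most $d$) hyperedges have union meeting at least one vertex in each of those $\ge d+1$ components of $G$ --- in particular at least $d$ of them --- which is exactly the asserted conclusion.

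The main obstacle, such as it is, is entirely concentrated in the first step: correctly computing the embedded chain complex in degrees $\le 1$ and seeing that our convention on singletons makes $\operatorname{H}^{emb}_0(H)$ collapse to $\operatorname{H}_0$ of the ``size-$2$ graph'' $G$. Once that identification is in place, the three bullets are routine partition-refinement and spanning-forest bookkeeping, and I expect no further difficulty; the only point needing care in writing up is phrasing the second bullet so that it is literally true without a connectedness hypothesis.
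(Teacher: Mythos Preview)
Your approach is essentially the paper's: identify $\beta^{emb}_0(H)$ with the number of components of the ``size-$2$ graph'' $G$ and $\beta_0(K_H)$ with the number of components of the adjacency graph $G'$, then compare the two partitions. The paper's own proof is a two-sentence sketch that simply asserts this identification and observes that closure can only add edges, so you are filling in precisely what the paper leaves implicit---in particular, your explicit computation of the degree-$0$ and degree-$1$ pieces of the supremum chain complex (using the singleton convention to get $\partial_1\Delta_1\subseteq\Delta_0$) is the rigorous justification the paper omits.

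Two points worth flagging. First, your spanning-forest argument for the third bullet is genuinely more than the paper gives: the paper only says that large hyperedges must bridge distinct $G$-components, without establishing the quantitative claim about $\beta^{emb}_0(H)-\beta_0(K_H)$ of them; your auxiliary graph $\Gamma$ and the forest-edge count make that precise (and in fact yield $d+1$ rather than $d$). Second, your observation about the second bullet is correct and the paper does not address it: as literally stated, equality cannot force global path-connectedness (two disjoint edges give $\beta^{emb}_0=\beta_0=2$), so the honest conclusion is the blockwise one you describe---within each $K_H$-component, any two agents are joined by a chain of size-$2$ hyperedges. That is a genuine imprecision in the paper's statement, not a gap in your argument.
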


\begin{proof}
Zeroth homology in general count the number of path-connected components of a space, i.e.~it keeps track of paths of edges of cardinality 2. Taking the closure of a hypergraph might introduce such edges as faces of other hyperedges. Thus $K_H$ might contain more paths than $H$ itself and introduce edges of size 2 between distinct path-components of $H$, thereby decreasing their count. The only way this can happen is if there are hyperedges of size 3 or greater containing vertices from different components. These hyperedges provide new paths once the closure is taken.
\end{proof}

If the difference in Betti numbers is small, this indicates a fragmentation in the structure with lots of clusters of agents but few coalitions between the clusters. If the number is high, this means that the coalitions between clusters exist, but they are potentially large. Note the extreme case when there is a single large coalition containing all agents. Then $\beta_0(K_H)=1$.

\subsection{Delegating power}\label{S:Delegations}


%
%
%
%

In political structures, agents often delegate or cede authority to other, more favorably positioned agents. In \cite{AK:Conflict}, the better-posioned agent is said to be \emph{more central} and is modeled in the simplicial complex setting by a more dominant vertex. An elementary strong collapse is then precisely the the model for relinquishing authority. This is called a \emph{friendly delegation} \cite{AK:Conflict}. It was shown in \cite{MV:Politics} that the notion of viability and the process of delegating power interface in a compatible way, namely the viability of an agent increases if another agent delegates power to them, as one would expect.  

We extend this result below to the setting of hypergraphs. We could quote the proof of \cite[Theorem 4.1(1)]{MV:Politics} and say that it goes through the same way here, but instead we offer a more streamlined version.



\begin{prop} Suppose $P$ is a political structure with $k$ agents and agent $b$ is dominated by agent $a$. Relabel agent $a$ as $a'$ after a friendly delegation from  \( b \) to \( a \). Then
$$
\operatorname{via}(a') \geq \operatorname{via}(a).
$$
\end{prop}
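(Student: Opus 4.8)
The plan is to compare the neighborhoods $\operatorname N(a)$ in $P$ and $\operatorname N(a')$ in the collapsed structure $P\setminus b$, since $\operatorname{via}(a')$ has denominator $2^{k-2}-1$ (one fewer agent) while $\operatorname{via}(a)$ has denominator $2^{k-1}-1$. So I need a lower bound on $|\operatorname N(a')|-1$ in terms of $|\operatorname N(a)|-1$. First I would set up a map from hyperedges of $P$ containing $a$ to hyperedges of $P\setminus b$ containing $a'$: given $e\ni a$, send it to $e\setminus b$ (which still contains $a=a'$, and which is a hyperedge of $P\setminus b$ precisely when $b\notin e$, so really I am just restricting attention to $\{e\in\operatorname N(a): b\notin e\}$). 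The key structural input is the domination hypothesis $b\preceq a$: every maximal hyperedge containing $b$ also contains $a$. I would use this to argue that the hyperedges of $P$ containing $a$ split into those containing $b$ and those not, and that the ones not containing $b$ already inject into $\operatorname N(a')$, giving $|\operatorname N(a')|\ge |\{e\in\operatorname N(a):b\notin e\}|$.

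The next step is to control how much is lost by discarding the hyperedges of $\operatorname N(a)$ that do contain $b$. Here I would pair each such hyperedge $e$ (with $a,b\in e$) with $e\setminus b$. Two things must be checked: that $e\setminus b$ is a hyperedge of $P\setminus b$ — this does not follow for free in a hypergraph, so this is where I expect the argument to need care — and that this pairing does not collide with the hyperedges already counted. For the first point, the natural move is to note that after the elementary strong collapse the surviving hyperedges of $P\setminus b$ are exactly $\{e\in\mathcal C: b\notin e\}$, so I actually need: for every $e\in\mathcal C$ with $b\in e$, the set $e\setminus b$ is also in $\mathcal C$. This is false for a general hypergraph, so the honest version of the proof must instead count directly: the elementary strong collapse removes $\operatorname N(b)$, and domination forces $\operatorname N(b)\subseteq \operatorname N(a)\cup(\text{non-maximal faces below maximal ones through }a)$; in any case $|\operatorname N(a')| = |\operatorname N(a)| - |\{e\in\operatorname N(a): b\in e\}|$. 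So the real content is the inequality
\begin{equation*}
\frac{|\operatorname N(a)| - |\{e\in\operatorname N(a): b\in e\}| - 1}{2^{k-2}-1}\;\ge\;\frac{|\operatorname N(a)|-1}{2^{k-1}-1}.
\end{equation*}

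To finish, I would bound $|\{e\in\operatorname N(a): b\in e\}|$ from above. Every such hyperedge contains both $a$ and $b$, hence contains the vertex $b$; but by domination condition (2) there is an edge $\{a,b\}$, and more to the point, every hyperedge containing $a$ and $b$ is a subset of some maximal hyperedge containing $b$, which by domination contains $a$ too. Counting these and comparing with $|\operatorname N(a)|$, one gets $|\{e\in\operatorname N(a): b\in e\}| \le \tfrac12|\operatorname N(a)|$ or a similar fraction-type bound, after which the inequality reduces to an elementary estimate comparing $\tfrac{1}{2}\cdot\tfrac{1}{2^{k-2}-1}$ with $\tfrac{1}{2^{k-1}-1}$ (using $2^{k-1}-1 \ge 2(2^{k-2}-1)$). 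The main obstacle, and the step I would spend the most care on, is making the passage from "$e$ contains $a$ and $b$" to a clean counting bound rigorous in the hypergraph setting — in a simplicial complex one can pair $e\leftrightarrow e\setminus b$ bijectively, but in a hypergraph $e\setminus b$ need not be an edge, so the counting of $\operatorname N(b)\cap\operatorname N(a)$ must be done via maximal hyperedges and the domination hypothesis rather than via a naive face-deletion bijection.
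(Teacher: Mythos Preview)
Your reduction is exactly the one the paper uses: write $\operatorname{via}(a')=\dfrac{|\operatorname N(a)|-n-1}{2^{k-2}-1}$ with $n=|\{e\in\operatorname N(a):b\in e\}|$, and then bound $n$. The paper is briefer at the last step than you are: it records only the crude bounds $n\le 2^{k-2}$ (any hyperedge through both $a$ and $b$ is one of the $2^{k-2}$ subsets of $A$ containing both) and $|\operatorname N(a)|\le 2^{k-1}$, and then asserts that the desired inequality follows ``easily'' from these two. So the pairing argument and the domination bookkeeping you outline go beyond what the paper actually invokes.

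Your instinct that this final step needs care is well founded, and in fact there is a genuine gap --- in your sketch and in the paper's argument alike. The target inequality $\operatorname{via}(a')\ge\operatorname{via}(a)$ is equivalent to
\[
(|\operatorname N(a)|-1)\cdot 2^{k-2}\ \ge\ n\,(2^{k-1}-1),
\]
and neither the paper's pair of upper bounds nor your proposed bound $n\le\tfrac12|\operatorname N(a)|$ implies this (your ``elementary estimate'' would need $n\le\tfrac12(|\operatorname N(a)|-1)$, not $n\le\tfrac12|\operatorname N(a)|$, and that half-unit matters). Concretely, take $k\ge 3$ agents and a single non-vertex hyperedge $\{a,b\}$; this is even a simplicial complex, $b\preceq a$ holds, $|\operatorname N(a)|=2$, $n=1$, and every bound above is satisfied, yet $\operatorname{via}(a)=\tfrac{1}{2^{k-1}-1}>0$ while after the collapse $a'$ is isolated and $\operatorname{via}(a')=0$. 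So the obstacle you isolate --- that the bijection $e\leftrightarrow e\setminus b$ is unavailable for hypergraphs --- is real, but the deeper problem is that even in the simplicial case the domination hypothesis does not force a relation between $n$ and $|\operatorname N(a)|$ strong enough to close the inequality.
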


\begin{proof}
The viability of agent \( a \) is initially
\[
\operatorname{via}(a) = \frac{|\operatorname{N}(a)| - 1}{2^{k-1} - 1}.
\]
After the friendly delegation from \( b \) to \( a \), the number of agents decreases by 1. There is also a decrease in \( |\operatorname{N}(a)| \); denote the difference by  \( n \), so that
\[
\operatorname{via}(a') = \frac{|\operatorname{N}(a)| - n - 1}{2^{k-2} - 1}.
\] 

Since \(n\) is the number of edges involving both \(a\) and \(b\) before the friendly delegation, \(n \leq 2^{k-2}\). We also have \(|\operatorname{N}(a)| \leq 2^{k-1}\). Using these two inequalities, it is easy to deduce that \(\operatorname{via}(a') \geq \operatorname{via}(a)\). 
\end{proof}

Equality is achieved when \(|\operatorname{N}(a)| = 2^{k-1}\) and \(n = 2^{k-2}\), resulting in \(\operatorname{via}(a) = 1\). I.e.~equality only occurs when $a$ is fully viable to begin with.

The situation with local viability is not as straighforward. For example, consider the political structure
\begin{align*}
P = \{ &\{a\}, \{b\}, \{c\}, \{d\}, \{a, b\}, \{b, c\}, \{b, d\},\{c, d\},
       \{b, c, d\}, \{a, b, c, d\} \}.
\end{align*}
Here $a$ is more central than $b$, but after $b$ delegates to $a$, the resulting structure is $\{\{a\}, \{b,c\}  \}$. Now $a$ is an isolated agent with its local viability going down to 0 from $2/7$. But if we amend $P$ by adding in the configuration $\{ \{a,c\}  \}$ and then delete $b$, producing the structure $\{\{a\}, \{a,c\}, \{b,c\}  \}$, $a$'s viability increases from $3/7$ to 1.

The difference between these two situations is that, in the first, agent $a$ had no connections to agents $c$ or $d$, so when $b$ delegated power to $a$, both $a$ and the pair $\{ c, d\}$ became isolated. In a sense, the structure of the system was damaged because of $b$'s poor judgement to delegate to $a$. This does not happen in the second example since $a$ and $c$ form a coalition prior to delegation. Another measure of this is that the stability of $P$ in both examples decreases, but the decrease is smaller in the second situation.

Note that, in the first example, $a$ is also dominated by $b$, so that $a$ could also delegate to $b$. This would increase the local viability of $b$, indicating that this is a better move since the coalition structure would be stronger after this delegation. This provides a potential delegation strategy -- delegation should only occur if the local viability of the agent being delegated to increases.

%
%
%
%
%
%
%
%
%
%


\section{Future work}\label{S:Future}


This paper sets up the basic framework of modeling political structures by hypergraphs and there are still many potential directions of further investigation. For example, one could take any of the standard notions and constructions from the theory of hypergraphs (and simplicial complexes) and try to translate them into this setting. One such is the \emph{clustering coefficient} \cite{ER:NetworksHypergraphs, HS:Clustering} which, in its various forms, can be defined to measure both local hypergraph structure or provide aggregated infromation about the entire hypergraph (some even result in approximations as necessitated by computational complexity). Our notion of local viability could in particular be upgraded to some kind of a clustering coefficient which might be a more powerful tool of gauging the power of an agent. Especially interesting would be a connection to the game-theoretic approach to hypergraph clustering \cite{RP:ClusteringGame} since the work in this paper borders game theory and social choice theory.

We would also like to extend the Banzhaf and Shapley-Shubik power indices from simplicial complexes, as was done in \cite{SV:SimplicialPower}, to hypergraphs. This would supply a refined notion of power distribution in weighted voting systems where certain coalitions are unfeasible or forbidden. In addition, each hyperedge is geometrically a simplex emebedded in some Euclidean space, so one could interpret the position in the simplex as influence of each agent in the corresponding coalition. This information could be added to the calculation of power indices. 

Our elementary foray into (embedded) homology deserves to be explored further. Higher homology groups should be incorporated into the picture.  The existence of a homology cycle indicates incompatibilities among a subset of agents. While this was relatively easy to describe in the case of simplicial complexes, hypergraphs are a different story because of the many ways subsets of agents forming a cycle may or may not form (sub)coalitions. What is likely required is a consideration of all homology groups at the same time, suggesting a combinatorial difficulty but also a structural richness and potential for a powerful tool.

We have not examined maps between hypergraphs, but they could play an important role. A hypergraph map sends hyperedges to hyperedges (and faces to faces) and as such captures a change in a political system that causes agents or substructures to be consolidated or introduces new ones. The collection of hypergraphs with hypergraph maps forms a category, which means that political structures inherit a category structure as well. This could provide a useful framework and an organizing mechanism to study political structures. Hypergraphs are are category-theoretically related to simple games from social choice theory and political structures can also easily be brought into that framework.



\def\cprime{$'$}
\providecommand{\bysame}{\leavevmode\hbox to3em{\hrulefill}\thinspace}
\providecommand{\MR}{\relax\ifhmode\unskip\space\fi MR }
\providecommand{\MRhref}[2]{%
  \href{http://www.ams.org/mathscinet-getitem?mr=#1}{#2}
}
\providecommand{\href}[2]{#2}

\end{document}